\title{An Efficient Algorithm for Device Detection and Channel Estimation in Asynchronous IoT Systems}
\begin{document}
\ninept
\maketitle
\begin{abstract}
A great amount of endeavour has recently been devoted to the joint device activity detection and channel estimation problem in massive machine-type communications. This paper targets at two practical issues along this line that have not been addressed before: asynchronous transmission from uncoordinated users and efficient algorithms for real-time implementation in systems with a massive number of devices. Specifically, this paper considers a practical system where the preamble sent by each active device is delayed by some unknown number of symbols due to the lack of coordination. We manage to cast the problem of detecting the active devices and estimating their delay and channels into a group LASSO problem. Then, a block coordinate descent algorithm is proposed to solve this problem globally, where the closed-form solution is available when updating each block of variables with the other blocks of variables being fixed, thanks to the special structure of our interested problem. Our analysis shows that the overall complexity of the proposed algorithm is low, making it suitable for real-time application.

\end{abstract}

\begin{keywords}
Massive machine-type communication, compressed sensing, asynchronous detection.
\end{keywords}
\newtheorem{example}{Example}
\newtheorem{corollary}{Corollary}
\newtheorem{definition}{Definition}
\newtheorem{lemma}{Lemma}
\newtheorem{theorem}{Theorem}
\newtheorem{proposition}{Proposition}
\newtheorem{remark}{Remark}
\newcommand{\mv}[1]{\mbox{\boldmath{$ #1 $}}}

\section{Introduction}\label{sec:Introduction}
Driven by the rapid advance of Internet of Things (IoT), massive machine-type communications (mMTC), the purpose of which is to provide wireless connectivity to a vast number of IoT devices, has attracted more and more attention recently. To reduce the device access delay, a grant-free random access scheme was advocated in \cite{Liu18}, where each active device first transmits its preamble to the base station (BS) and then directly transmits its data without waiting for the grant from the BS. To enable this low-latency access scheme, the BS should be able to detect the active devices and estimate their channels based on the received preambles \cite{Liu18,tutorial}. Recently, \cite{Part1,Chen18} show that joint device activity detection and channel estimation can be formulated as a compressed sensing problem because of the sparse device activity. Such a problem is then solved via the approximate message passing (AMP) algorithm \cite{donoho2009message} in \cite{Part1,Chen18,senel2018grant} or other sparse optimization techniques in \cite{jiang2018joint,ke2020compressive,ding2019sparsity,sun2019exploiting}. To practically implement joint device activity detection and channel estimation under the grant-free random access scheme, there are two crucial issues to address. First, in a practical mMTC system consisting of a large number of low-cost IoT devices, it is impossible to ensure that all the active devices are perfectly synchronized. Thereby, the preamble sequence sent by each active device may be delayed by some unknown number of symbols at the beginning of each coherence block. In this case, it is unknown that whether the device detection and channel estimation problem can be solved under the compressed sensing framework as for the synchronous case \cite{Part1,Chen18,senel2018grant}. Apart from the issue of asynchronous transmission, another challenge lies in the complexity. In mMTC, the number of devices is huge. Moreover, due to the recent success of the massive multiple-input multiple-output (MIMO) technique, the number of antennas at the BS is becoming large as well. In this case, the joint device activity detection and channel estimation problem involves a vast number of unknown variables. It is thus important to propose some efficient algorithm that can be implemented in a practical but large IoT system.

To tackle the above two challenges, this paper aims to propose a low-complexity algorithm to solve the problem of detecting the active devices and estimating their delay and channels in asynchronous mMTC systems. Specifically, by introducing an enlarged sensing matrix that consists of all the effective preambles of the devices (for each device, each of its effective preambles denotes a preamble that is delayed by a particular number of symbols), we show that the above problem can be cast into a compressed sensing problem, similar to its counterpart in synchronous mMTC systems \cite{Part1,Liu18,Chen18,senel2018grant}. To guarantee that at most one effective preamble is detected to be active for each device, the compressed sensing problem is further formulated as a group LASSO problem \cite{Yuan06}. We propose a block coordinate descent (BCD) algorithm to solve this problem globally. Thanks to the problem's special structure, we show that when the BCD algorithm optimizes some block of variables with the other blocks of variables being fixed, the optimal solution can be obtained in closed form. Further, the overall complexity of our algorithm is linear to the numbers of devices and antennas at the BS, which makes it appealing in large IoT systems. Last, we remark that our considered device activity detection problem in asynchronous mMTC systems is in sharp contrast to the information decoding counterpart in asynchronous human-type communication systems, which has been widely studied in the literature, because of the different techniques used for activity detection and information decoding.

\section{System Model}\label{sec:system model}

This paper considers the uplink communication in an mMTC system, which consists of a BS equipped with $M$ antennas and $N$ single-antenna IoT devices denoted by the set $\mathcal{N}\triangleq\{1,\ldots,N\}$. We assume quasi-static block-fading channels, in which all channels remain approximately constant in each fading block, but vary independently from block to block. The channel from device $n$ to the BS is denoted by $\mv{h}_n\in \mathbb{C}^{M\times 1}$, $\forall n\in \mathcal{N}$. We assume that the device channels follow the independent and identically distributed (i.i.d.) Rayleigh fading channel model, i.e., $\mv{h}_n\sim \mathcal{CN}(\mv{0},\alpha_n\mv{I})$, $\forall n$, where $\alpha_n$ denotes the path loss of device $n$.

Due to the sporadic IoT data traffic, only some of the devices become active within each channel coherence block. We thus define the device activity indicator functions as follows:
\begin{align}\label{eqn:indicator}
\lambda_n=\left\{\begin{array}{ll} 1, & {\rm if} ~ {\rm device } ~ n ~ {\rm is ~ active}, \\ 0, & {\rm otherwise}, \end{array}\right. ~ \forall n\in \mathcal{N}.
\end{align}Then, the set of active devices is defined by $\mathcal{K}=\{n:\lambda_n=1,\forall n\in \mathcal{N}\}$.

It is assumed that the two-phase grant-free random access scheme \cite{Liu18} is adopted for the considered system, where each active device first sends its preamble sequence to the BS for device activity detection and channel estimation, and then sends its data to the BS for decoding. Further, this paper mainly focuses on the first phase of the above grant-free random access scheme. In this phase, each user $n$ is assigned with a unique preamble sequence denoted by $\mv{a}_n=[a_{n,1},\ldots,a_{n,L}]^T\in \mathbb{C}^{L\times 1}$, $\forall n$, where $L$ denotes the length of the preamble sequence, and $a_{n,l}$ with $|a_{n,l}|=1$ denotes the $l$-th preamble symbol of device $n$, $l=1,\ldots,L$.

At the beginning of each coherence block, all the active users tend to transmit their preambles to the BS. However, due to the lack of perfect coordination among a large number of low-cost devices, the preamble transmissions are in general asynchronous. Let $\tau_n$ denote the discrete delay (in terms of symbols) of user $n$ for transmitting $\mv{a}_n$, $\forall n\in \mathcal{K}$. It is assumed that at each coherence block, the preamble transmission delay for each device $n$ is a random integer value in the regime of $\tau_n\in [0,\tau_{{\rm max}}]$, $\forall n$, where $\tau_{{\rm max}}$ denotes the maximum delay of all the devices over all the coherence blocks. Moreover, $\tau_{{\rm max}}$ is assumed to be known by the BS.

Since each active device $n\in \mathcal{K}$ starts to transmit its preamble at the $(\tau_n+1)$-th symbol in the coherence block, we define the effective preamble sequence for device $n$ as
\begin{align}\label{eqn:effective preamble sequence}
\bar{\mv{a}}_n(\tau_n)&=[\bar{a}_{n,1}(\tau_n),\ldots,\bar{a}_{n,L+\tau_{{\rm max}}}(\tau_n)]^T \nonumber \\ &=[\underset{\tau_n}{\underbrace{0,\ldots,0}},a_{n,1},\ldots,a_{n,L},\underset{\tau_{{\rm max}}-\tau_n}{\underbrace{0,\ldots,0}}]^T, ~ \forall n,
\end{align}where $\bar{a}_{n,j}(\tau_n)$ denotes the $j$-th transmit symbol of device $n$ given a delay of $\tau_n$ symbols, $1\leq j \leq L+\tau_{{\rm max}}$. Note that $\bar{a}_{n,j}=0$ if $1\leq j\leq \tau_n$ or $L+\tau_n+1\leq j\leq L+\tau_{{\rm max}}$. Moreover, under the two-phase grant-free random access scheme, after transmitting the pilot in Phase I, each active device should wait for $\tau_{{\rm max}}$ symbols before transmitting its data such that the pilot received at the BS in the first $L+\tau_{{\rm max}}$ time slots is not interfered by the data. Then, at time slot $1\leq j \leq L+\tau_{{\rm max}}$, the received signal at the BS is merely contributed by pilot and given by
\begin{align}\label{eqn:signal at BS}
\mv{y}_{j}\hspace{-2pt}=\hspace{-2pt}\sum\limits_{n=1}^N\lambda_n\mv{h}_n \sqrt{p}\bar{a}_{n,j}(\tau_n)\hspace{-2pt}+\hspace{-2pt}\mv{z}_{j}, ~ j=1,\ldots,L+\tau_{{\rm max}},
\end{align}where $p$ denotes the identical transmit power of all the devices, and $\mv{z}_{j}\sim \mathcal{CN}(\mv{0},\sigma^2\mv{I})$ denotes the additive white Gaussian noise (AWGN) at the BS. Further, the overall received signal at the BS over all the $L+\tau_{{\rm max}}$ time slots, denoted by $\mv{Y}=[\mv{y}_1,\ldots,\mv{y}_{L+\tau_{{\rm max}}}]^T\in \mathbb{C}^{(L+\tau_{{\rm max}})\times M}$, is given by
\begin{align}\label{eqn:received signal}
\mv{Y}\hspace{-2pt}=\hspace{-2pt}\sum\limits_{n=1}^N\lambda_n\hspace{-1pt} \sqrt{p}\bar{\mv{a}}_n(\tau_n) \mv{h}_n^T\hspace{-3pt}+\hspace{-2pt}\mv{Z}\hspace{-2pt}=\hspace{-4pt}\sqrt{p}\bar{\mv{A}}(\tau_1,\hspace{-2pt}\ldots\hspace{-2pt},\tau_N)\mv{X}\hspace{-2pt}+\hspace{-2pt}\mv{Z},
\end{align}where $\bar{\mv{A}}(\tau_1,\ldots,\tau_N)=[\bar{\mv{a}}_1(\tau_1),\ldots,\bar{\mv{a}}_N(\tau_N)]$, $\mv{X}=[\mv{x}_1,$ $\ldots,\mv{x}_N]^T$ with $\mv{x}_n=\lambda_n\mv{h}_n$, $\forall n$, and $\mv{Z}=[\mv{z}_1,\ldots,\mv{z}_{L+\tau_{{\rm max}}}]^T$. The goal of the BS is to estimate the device activity $\lambda_n$'s and delay $\tau_n$'s as well as active devices' channels $\mv{h}_n$'s based on its received signal $\mv{Y}$ given in (\ref{eqn:received signal}) and its knowledge of the preamble sequences $\mv{a}_n$'s.

\section{A Compressed Sensing Problem Formulation for Estimating \lowercase{$\{\lambda_n,\tau_n,\mv{h}_n\}$}}\label{sec:Device Activity and Delay Detection as a Compressed Sensing Problem}

In this section, we show that the detection of active devices as well as the estimation of their delay and channels can be cast into a compressed sensing problem. Specifically, define
\begin{align}\label{eqn:sensing matrix}
\mv{A}_{{\rm ext}}&=[\bar{\mv{a}}_1(0), \ldots, \bar{\mv{a}}_1(\tau_{{\rm max}}), \ldots, \bar{\mv{a}}_N(0), \ldots, \bar{\mv{a}}_N(\tau_{{\rm max}})]\nonumber \\ &\in \mathbb{C}^{(L+\tau_{{\rm max}})\times (\tau_{{\rm max}}+1)N}
\end{align}as the collection of the possible effective preamble sequences of all the devices. Moreover, define the indicator functions of device activity and delay as follows:
\begin{align}\label{eqn:indicator activity and delay}
\beta_{n,\tau}=\left\{\begin{array}{ll} 1, & {\rm if} ~ \lambda_n=1 ~ {\rm and} ~ \tau=\tau_n, \\ 0, & {\rm otherwise}, \end{array}\right. ~ \forall n, \tau.
\end{align}In other words, $\beta_{n,\tau}=1$ only if device $n$ is active and its delay is of $\tau$ symbol duration. Note that if device $n$ is active, only one of $\beta_{n,\tau}$'s, $\tau=0,\ldots,\tau_{{\rm max}}$, is equal to 1, i.e., $\sum_{\tau=0}^{\tau_{{\rm max}}}\beta_{n,\tau}\leq 1$, $\forall n$. Then, (\ref{eqn:received signal}) can be reformulated as
\begin{align}\label{eqn:received signal new}
\mv{Y}=\sqrt{p}\mv{A}^{{\rm ext}}\mv{X}^{{\rm ext}}+\mv{Z},
\end{align}where $\mv{X}^{{\rm ext}}= [\mv{x}_{1,0}^{{\rm ext}},\ldots,\mv{x}_{1,\tau_{{\rm max}}}^{{\rm ext}},\ldots,\mv{x}_{N,0}^{{\rm ext}},\ldots,\mv{x}_{N,\tau_{{\rm max}}}^{{\rm ext}}]^T\in \mathbb{C}^{(\tau_{{\rm max}}+1)N\times M}$ with
\begin{align}\label{eqn:detect}
\mv{x}_{n,\tau}^{{\rm ext}}=\beta_{n,\tau}\mv{h}_n, ~~~ \forall n,\tau.
\end{align}

Suppose that $\mv{X}^{{\rm ext}}$ can be estimated according to (\ref{eqn:received signal new}). If $\mv{x}_{n,\tau}^{{\rm ext}}\neq \mv{0}$, i.e., $\beta_{n,\tau}=1$, device $n$ is active, i.e., $\lambda_n=1$, and its delay and channel are $\tau_n=\tau$ and $\mv{h}_n=\mv{x}_{n,\tau}^{{\rm ext}}$. If $\mv{x}_{n,\tau}^{{\rm ext}}=\mv{0}$, $\forall \tau$, i.e., $\beta_{n,\tau}=0$, $\forall \tau$, device $n$ is inactive, i.e., $\lambda_n=0$. Thus, the key of the joint estimation of device activity, delay, and channels lies in estimating $\mv{X}^{{\rm ext}}$ based on (\ref{eqn:received signal new}).

Note that estimating $\mv{X}^{{\rm ext}}$ based on (\ref{eqn:received signal new}) is a compressed sensing problem, since $\mv{X}^{{\rm ext}}$ is a row-sparse matrix according to (\ref{eqn:indicator activity and delay}) and (\ref{eqn:detect}). In this paper, the compressed sensing problem of estimating $\mv{X}^{{\rm ext}}$ is formulated as follows:
\begin{align} \mathop{\mathrm{minimize}}_{\mv{X}^{{\rm ext}}} & ~ \left\|\mv{Y}-\sqrt{p}\mv{A}^{{\rm ext}}\mv{X}^{{\rm ext}}\right\|_{{\rm F}}^2 \label{eqn:problem} \\
\mathrm {subject ~ to}  & ~ \left\|\left[\left\|\mv{x}_{n,0}^{{\rm ext}}\right\|_2,\ldots, \left\|\mv{x}_{n,\tau_{{\rm max}}}^{{\rm ext}}\right\|_2\right]\right\|_0\leq 1, ~~~ \forall n, \label{eqn:sparse constraint}
\end{align}where $\|\mv{A}\|_{{\rm F}}$ denotes the Frobenius norm of matrix $\mv{A}$, i.e., $\|\mv{A}\|_{{\rm F}}=\sqrt{{\rm tr}(\mv{A}\mv{A}^H)}$, and $\|\mv{a}\|_0$ denotes the zero norm of vector $\mv{a}$, i.e., the number of non-zero elements in $\mv{a}$.

In the above problem, constraint (\ref{eqn:sparse constraint}) is to guarantee that at most one delay pattern is detected to be active for each device $n$. Mathematically, (\ref{eqn:sparse constraint}) imposes a group sparsity constraint on the structure of $\mv{X}^{{\rm ext}}$: in each block consisting of $\tau_{{\rm max}}+1$ vectors $\mv{x}_{n,0}^{{\rm ext}},\ldots,\mv{x}_{n,\tau_{{\rm max}}}^{{\rm ext}}$, at most one of them is a non-zero vector. However, this constraint is non-convex. In the rest of this paper, we adopt the group LASSO technique to deal with this non-convex group sparsity constraint.

Under the group LASSO technique, given any coefficient $\rho>0$, we are interested in the following convex problem \cite{Yuan06}
\begin{align} \hspace{-8pt} \mathop{\mathrm{minimize}}_{\mv{X}^{{\rm ext}}}  ~ 0.5\left\|\hspace{-1pt}\mv{Y}\hspace{-3pt}-\hspace{-3pt}\sqrt{p}\mv{A}^{{\rm ext}}\mv{X}^{{\rm ext}}\hspace{-1pt}\right\|_{{\rm F}}^2\hspace{-2pt}+\hspace{-2pt}\rho\hspace{-1pt}\sum\limits_{n=1}^N\hspace{-1pt}\sum\limits_{\tau=0}^{\tau_{{\rm max}}}\hspace{-1pt}\left\|\mv{x}_{n,\tau}^{{\rm ext}}\right\|_2. \label{eqn:problem 1}
\end{align}In problem (\ref{eqn:problem 1}), we penalize the estimation error with a mixed $\ell_2/\ell_1$ norm, i.e., $\rho\sum_{n=1}^N\sum_{\tau=0}^{\tau_{{\rm max}}}\|\mv{x}_{n,\tau}^{{\rm ext}}\|_2$. Note that this penalty is minimized when all the zero entries are put together in some rows of $\mv{X}^{{\rm ext}}$. As a result, given a large value of $\rho$, the optimal solution of problem (\ref{eqn:problem 1}) should be a row-sparse matrix. Moreover, if $\rho$ is large enough, the corresponding solution will be sufficiently sparse, and therefore, constraint (\ref{eqn:sparse constraint}) in problem (\ref{eqn:problem}) can be satisfied.

In the following two sections, we will introduce how to solve problem (\ref{eqn:problem 1}) efficiently given $\rho$ and how to select a proper value of $\rho$ so as to balance between activity sparsity and channel estimation error, respectively.

\section{An Efficient BCD Algorithm for Problem (11)}\label{sec:Standard BCD Algorithm}
The BCD type of algorithms are efficient in solving large-scale optimization problems with a vast number of variables \cite{ber99programming}. In this section, we introduce a low-complexity BCD algorithm to solve problem (\ref{eqn:problem 1}) given any $\rho>0$.

\subsection{Algorithm Design}
Under the BCD algorithm, at each time, we merely optimize one vector $\mv{x}_{\bar{n},\bar{\tau}}^{{\rm ext}}$ for some particular $1\leq \bar{n}\leq N$ and $0\leq \bar{\tau}\leq \tau_{{\rm max}}$ given $\mv{x}_{n,\tau}^{{\rm ext}}=\tilde{\mv{x}}_{n,\tau}^{{\rm ext}}$'s, $\forall (n,\tau)\neq (\bar{n},\bar{\tau})$. The corresponding optimization problem is formulated as
\begin{align} \mathop{\mathrm{minimize}}_{\mv{x}_{\bar{n},\bar{\tau}}^{{\rm ext}}} & ~ 0.5\left\|\tilde{\mv{Y}}_{\bar{n},\bar{\tau}}\hspace{-2pt}-\hspace{-2pt}\sqrt{p}\mv{a}_{\bar{n},\bar{\tau}}^{{\rm ext}}\left(\mv{x}_{\bar{n},\bar{\tau}}^{{\rm ext}}\right)^T\right\|_{{\rm F}}^2\hspace{-2pt}+\hspace{-2pt}\rho\left\|\mv{x}_{\bar{n},\bar{\tau}}^{{\rm ext}}\right\|_2 \label{eqn:problem 2}
\end{align}where
\begin{align}\label{eqn:A}
\tilde{\mv{Y}}_{\bar{n},\bar{\tau}}=\mv{Y}-\sqrt{p}\sum\limits_{(n,\tau)\neq (\bar{n},\bar{\tau})}\mv{a}_{n,\tau}^{{\rm ext}}\left(\tilde{\mv{x}}_{n,\tau}^{{\rm ext}}\right)^T.
\end{align}Somewhat surprisingly, we can obtain the closed-form optimal solution of problem (\ref{eqn:problem 2}), as shown in the following theorem.

\begin{theorem}\label{theorem1}
The objective function in problem (\ref{eqn:problem 2}) is strongly convex, and its global minimum is achieved by
\begin{align}\label{eqn:optimal solution}
\left(\hat{\mv{x}}_{\bar{n},\bar{\tau}}^{{\rm ext}}\right)^T\hspace{-3pt}=\hspace{-3pt}\left\{\begin{array}{ll}\hspace{-3pt} \gamma_{\bar{n},\bar{\tau}} \left(\mv{a}_{\bar{n},\bar{\tau}}^{{\rm ext}}\right)^H\tilde{\mv{Y}}_{\bar{n},\bar{\tau}}, \hspace{-2pt} & \hspace{-2pt} {\rm if} ~ \left\|\left(\mv{a}_{\bar{n},\bar{\tau}}^{{\rm ext}}\right)^H\tilde{\mv{Y}}_{\bar{n},\bar{\tau}}\right\|_2\hspace{-2pt}>\hspace{-2pt}\frac{\rho}{\sqrt{p}}, \\
\hspace{-3pt} \mv{0}, \hspace{-2pt} & \hspace{-2pt} {\rm otherwise},\end{array}\right.
\end{align}where
\begin{align}\label{eqn:gamma}
\gamma_{\bar{n},\bar{\tau}}=\frac{1}{L\sqrt{p}}-\frac{\rho}{Lp\left\|\left(\mv{a}_{\bar{n},\bar{\tau}}^{{\rm ext}}\right)^H\tilde{\mv{Y}}_{\bar{n},\bar{\tau}}\right\|_2}>0.
\end{align}
\end{theorem}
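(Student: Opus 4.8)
The plan is to reduce problem (\ref{eqn:problem 2}) to a one-dimensional optimization by exploiting two structural facts, and then solve that scalar problem in closed form. Throughout, write $\mv{a}=\mv{a}_{\bar{n},\bar{\tau}}^{{\rm ext}}$, $\mv{x}=\mv{x}_{\bar{n},\bar{\tau}}^{{\rm ext}}$, $\tilde{\mv{Y}}=\tilde{\mv{Y}}_{\bar{n},\bar{\tau}}$, and let $\mv{b}=\mv{a}^H\tilde{\mv{Y}}$. The first fact is that each effective preamble consists of $L$ unit-modulus symbols padded with zeros [cf. (\ref{eqn:effective preamble sequence})], so $\mv{a}^H\mv{a}=\|\mv{a}\|_2^2=L$. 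Using this together with $\|\mv{a}\mv{x}^T\|_{{\rm F}}^2=(\mv{a}^H\mv{a})\|\mv{x}\|_2^2=L\|\mv{x}\|_2^2$ and $\mathrm{tr}(\tilde{\mv{Y}}^H\mv{a}\mv{x}^T)=\mv{x}^T\mv{b}^H$, the objective of (\ref{eqn:problem 2}) equals, up to the additive constant $\tfrac12\|\tilde{\mv{Y}}\|_{{\rm F}}^2$,
\begin{align*}
g(\mv{x})=\frac{pL}{2}\|\mv{x}\|_2^2-\sqrt{p}\,\mathrm{Re}\{\mv{x}^T\mv{b}^H\}+\rho\|\mv{x}\|_2.
\end{align*}
Since $p>0$ and $L\geq 1$, the first term is $pL$-strongly convex, the second is affine, and $\rho\|\mv{x}\|_2$ is convex; hence $g$ is strongly convex, which establishes the first claim and guarantees a unique global minimizer.

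Next I would collapse the problem to one dimension. By the Cauchy--Schwarz inequality $\mathrm{Re}\{\mv{x}^T\mv{b}^H\}\leq\|\mv{x}\|_2\|\mv{b}\|_2$, with equality precisely when $\mv{x}$ is a nonnegative real multiple of $\mv{b}^T$. Consequently $g(\mv{x})\geq h(\|\mv{x}\|_2)$ for every $\mv{x}$, where
\begin{align*}
h(r)=\frac{pL}{2}r^2-\sqrt{p}\,\|\mv{b}\|_2\,r+\rho r,\qquad r\geq 0,
\end{align*}
and equality is attained along the ray $\mv{x}=r\,\mv{b}^T/\|\mv{b}\|_2$. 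It therefore suffices to minimize the convex quadratic $h$ over $r\geq 0$.

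Finally I would solve this scalar problem. The unconstrained stationary point of $h$ is $r^{\star}=(\sqrt{p}\|\mv{b}\|_2-\rho)/(pL)$. When $\|\mv{b}\|_2=\|\mv{a}^H\tilde{\mv{Y}}\|_2>\rho/\sqrt{p}$ this is positive, so the minimizer is $\hat{\mv{x}}=r^{\star}\mv{b}^T/\|\mv{b}\|_2=\gamma_{\bar{n},\bar{\tau}}\mv{b}^T$, which upon transposing yields (\ref{eqn:optimal solution}) with $\gamma_{\bar{n},\bar{\tau}}$ exactly as in (\ref{eqn:gamma}); otherwise $h'(0)\geq 0$, so $h$ is nondecreasing on $[0,\infty)$ and the minimizer is $\hat{\mv{x}}=\mv{0}$, matching the second branch (the degenerate case $\mv{b}=\mv{0}$ falls here too).

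I expect the main obstacle to be the careful handling of the complex-valued, nonsmooth objective: verifying the expansion and the Cauchy--Schwarz equality condition over $\mathbb{C}^M$ (equivalently, the subgradient behaviour of the $\ell_2$ norm at and away from $\mv{0}$), and checking that the threshold $\rho/\sqrt{p}$ and the constant $\gamma_{\bar{n},\bar{\tau}}$ emerge exactly as stated rather than off by the factors of $p$ and $L$ introduced through $\|\mv{a}\|_2^2=L$. An alternative route is to impose $\mv{0}\in\partial g(\hat{\mv{x}})$ directly and split into the cases $\hat{\mv{x}}=\mv{0}$ and $\hat{\mv{x}}\neq\mv{0}$, but the one-dimensional reduction above sidesteps the subdifferential of the norm entirely.
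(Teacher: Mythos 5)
Your proof is correct, and it takes a genuinely different route from the paper's. The paper imposes the first-order optimality condition $\mv{0}\in\partial f_{\bar{n},\bar{\tau}}(\mv{x}_{\bar{n},\bar{\tau}}^{{\rm ext}})$ directly and splits into the two cases $\mv{x}_{\bar{n},\bar{\tau}}^{{\rm ext}}\neq\mv{0}$ (set the gradient to zero, deduce proportionality to $(\mv{a}_{\bar{n},\bar{\tau}}^{{\rm ext}})^H\tilde{\mv{Y}}_{\bar{n},\bar{\tau}}$, then solve a scalar equation for the proportionality constant) and $\mv{x}_{\bar{n},\bar{\tau}}^{{\rm ext}}=\mv{0}$ (use that the subdifferential of $\|\cdot\|_2$ at the origin is the unit ball, giving the threshold $\rho/\sqrt{p}$) --- this is exactly the ``alternative route'' you flag at the end. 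Your argument instead expands the Frobenius norm using $\|\mv{a}_{\bar{n},\bar{\tau}}^{{\rm ext}}\|_2^2=L$, applies Cauchy--Schwarz to collapse the problem onto the ray through $\mv{b}^T$, and minimizes a one-dimensional convex quadratic over $r\geq 0$; uniqueness from strong convexity then certifies that the point you exhibit is \emph{the} minimizer, so no subdifferential calculus is needed at all. Each approach has its merits: yours is more elementary and self-contained (it sidesteps the nonsmooth analysis entirely, makes explicit where the factor $L$ comes from, and handles the complex inner product and the degenerate case $\mv{b}=\mv{0}$ cleanly), while the paper's subgradient derivation is the standard template for block soft-thresholding and is the form that generalizes when the per-block sensing operator is a genuine matrix rather than a single vector $\mv{a}_{\bar{n},\bar{\tau}}^{{\rm ext}}$ (in which case the ray reduction is unavailable). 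Both yield identical threshold $\rho/\sqrt{p}$ and coefficient $\gamma_{\bar{n},\bar{\tau}}$, so the proposal fully establishes the theorem.
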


\begin{proof}
Please refer to Appendix A.
\end{proof}

The optimal solution (\ref{eqn:optimal solution}) in Theorem \ref{theorem1} indicates that the BS should keep applying the matched filters $\mv{a}_{\bar{n},\bar{\tau}}^{{\rm ext}}$'s to denoise $\tilde{\mv{Y}}_{\bar{n},\bar{\tau}}$'s. Then, if the power of some resulting signal, i.e., $\left\|\left(\mv{a}_{\bar{n},\bar{\tau}}^{{\rm ext}}\right)^H\tilde{\mv{Y}}_{\bar{n},\bar{\tau}}\right\|_2$, is larger than a threshold $\rho/\sqrt{p}$, then the estimation of $\mv{x}_{\bar{n},\bar{\tau}}^{{\rm ext}}$ is a non-zero vector. Otherwise, the estimation of $\mv{x}_{\bar{n},\bar{\tau}}^{{\rm ext}}$ is a zero vector. This implies that the solution to problem (\ref{eqn:problem 1}) is more sparse as $\rho$ increases.

\begin{remark}
In general, a group LASSO problem can be merely solved numerically. However, under the BCD framework, the sensing matrix reduces to a vector $\mv{a}_{\bar{n},\bar{\tau}}^{{\rm ext}}$ in problem (\ref{eqn:problem 2}) to optimize $\mv{x}_{\bar{n},\bar{\tau}}^{{\rm ext}}$. In this case, there is a closed-form solution, which is appealing to reduce the computational complexity in mMTC.
\end{remark}

Based on Theorem \ref{theorem1}, the BCD algorithm to solve problem (\ref{eqn:problem 1}) is summarized in Algorithm \ref{table2}. Algorithm \ref{table2} is an iterative algorithm. At each outer iteration of the algorithm, we first optimize $\mv{x}_{1,0}^{{\rm ext}}$ given $\mv{x}_{n,\tau}=\tilde{\mv{x}}_{n,\tau}$'s, $\forall (n,\tau)\neq (1,0)$, and then optimize $\mv{x}_{1,1}^{{\rm ext}}$ given $\mv{x}_{n,\tau}=\tilde{\mv{x}}_{n,\tau}$'s, $\forall (n,\tau)\neq (1,1)$, and so on, as shown in Step 2.1 to Step 2.5. When $\mv{x}_{n,\tau}^{{\rm ext}}$'s are all optimized once, we can calculate the objective value of problem (\ref{eqn:problem 1}) achieved after the $t$-th iteration of the algorithm, denoted by $\Gamma^{(t)}$ as shown in Step 3. The algorithm terminates when the objective value of problem (\ref{eqn:problem 1}) does not decrease sufficiently over two iterations.

\begin{algorithm}[t]
{\bf Initialization}: Set the initial values of $\mv{x}_{n,\tau}^{{\rm ext}}$'s as $\tilde{\mv{x}}_{n,\tau}^{{\rm ext}}=\mv{0}$, $n=1,\ldots,N$, $\tau=0,\ldots,\tau_{{\rm max}}$, $\bar{\mv{Y}}=\mv{Y}$, where $\mv{Y}$ is the received signal given in (\ref{eqn:received signal}), and $t=1$; \\
{\bf Repeat}:
\begin{itemize}
\item[1] Set $j=1$;
\item[2] While $j\leq N(\tau_{{\rm max}}+1)$:
\begin{itemize}
\item[2.1] Set $\bar{n}=\left\lfloor j/(\tau_{{\rm max}}+1)\right\rfloor+1$ and $\bar{\tau}=j-(\bar{n}-1)(\tau_{{\rm max}}+1)$;
\item[2.2] Set $\mv{x}_{n,\tau}^{{\rm ext}}=\tilde{\mv{x}}_{n,\tau}^{{\rm ext}}$'s, $\forall (n,\tau)\neq (\bar{n},\bar{\tau})$, and $\tilde{\mv{Y}}_{\bar{n},\bar{\tau}}=\bar{\mv{Y}}+\sqrt{p}\mv{a}_{\bar{n},\bar{\tau}}^{{\rm ext}}\left(\tilde{\mv{x}}_{\bar{n},\bar{\tau}}^{{\rm ext}}\right)^T$;
\item[2.3] Find the optimal solution to problem (\ref{eqn:problem 2}) based on (\ref{eqn:optimal solution}), which is denoted by $\hat{\mv{x}}_{\bar{n},\bar{\tau}}^{{\rm ext}}$;
\item[2.4] Set $\tilde{\mv{x}}_{\bar{n},\bar{\tau}}^{{\rm ext}}=\hat{\mv{x}}_{\bar{n},\bar{\tau}}^{{\rm ext}}$ and $\bar{\mv{Y}}=\tilde{\mv{Y}}_{\bar{n},\bar{\tau}}-\sqrt{p}\mv{a}_{\bar{n},\bar{\tau}}^{{\rm ext}}\left(\hat{\mv{x}}_{\bar{n},\bar{\tau}}^{{\rm ext}}\right)^T$;
\item[2.5] Set $j=j+1$;
\end{itemize}
\item[3] Set $\Gamma^{(t)}=0.5\left\|\bar{\mv{Y}}\right\|_{{\rm F}}^2+\rho\sum_{n=1}^N\sum_{\tau=0}^{\tau_{{\rm max}}}\left\|\tilde{\mv{x}}_{n,\tau}^{{\rm ext}}\right\|_2$ and $t=t+1$;
\end{itemize}
{\bf Until} $(\Gamma^{(t-1)}-\Gamma^{(t)})/\Gamma^{(t-1)}\leq \xi$, where $\xi$ is a small positive number.
\caption{BCD Algorithm for Solving Problem (\ref{eqn:problem 1})}
\label{table2}
\end{algorithm}

\subsection{Algorithm Properties}
After introducing how the BCD algorithm works, in this subsection, we present some theoretical properties of this algorithm about its optimality and complexity.
\begin{theorem}\label{theorem2}
Every limit point of the iterates generated by Algorithm \ref{table2} is a global solution of problem (\ref{eqn:problem 1}). Moreover, for all sufficiently large $t$, it follows that
\begin{align}\label{eqn:sublinear}
\Gamma^{(t)}-\Gamma^\ast\leq {\cal O}\left(\frac{1}{t}\right),
\end{align}where $\Gamma^{(t)}$, as given in Step 3 of Algorithm \ref{table2}, is the objective value of problem (\ref{eqn:problem 1}) at the $t$-th iteration of Algorithm \ref{table2}, and $\Gamma^\ast$ is the optima value of problem (\ref{eqn:problem 1}).
\end{theorem}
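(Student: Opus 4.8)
The plan is to establish the two claims of Theorem~\ref{theorem2} separately, relying on the structure of problem (\ref{eqn:problem 1}) as a convex composite minimization whose smooth part is the quadratic data-fidelity term and whose nonsmooth part is a block-separable sum of $\ell_2$ norms. The key structural observation is that the nonsmooth penalty $\rho\sum_{n,\tau}\|\mv{x}_{n,\tau}^{{\rm ext}}\|_2$ is separable across exactly the blocks over which the BCD algorithm cycles, and by Theorem~\ref{theorem1} each block subproblem (\ref{eqn:problem 2}) is strongly convex with a unique closed-form minimizer. This places the iteration squarely within the framework of block coordinate descent for convex problems with separable nonsmooth terms, for which global convergence results are classical.

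For the first claim (every limit point is a global solution), I would argue as follows. First I would note that the objective of (\ref{eqn:problem 1}) is convex, coercive (since the penalty dominates as $\|\mv{X}^{{\rm ext}}\|\to\infty$), and bounded below, so the sequence of iterates is bounded and the objective values $\Gamma^{(t)}$ form a nonincreasing, convergent sequence. Because each block update is an \emph{exact} minimization (Theorem~\ref{theorem1} gives the exact global minimizer of the per-block problem) and the nonsmooth part is block-separable, the standard Tseng-type convergence theorem for block coordinate minimization applies: the stationarity conditions, which for a convex problem coincide with global optimality, are recovered at every limit point. Concretely, at a limit point the per-block optimality conditions (the subdifferential inclusion $\mv{0}\in\partial$-condition that characterizes (\ref{eqn:optimal solution})) hold simultaneously for all blocks, and since the smooth part is differentiable, these block conditions aggregate into the full first-order optimality condition for (\ref{eqn:problem 1}); convexity then upgrades stationarity to global optimality. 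I would cite \cite{ber99programming} for the underlying block-minimization convergence machinery.

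For the sublinear rate (\ref{eqn:sublinear}), the plan is to invoke the now-standard $\mathcal{O}(1/t)$ convergence-rate result for cyclic block coordinate descent (equivalently, block coordinate gradient/proximal descent) on convex composite objectives with a smooth term having block-Lipschitz gradient and a block-separable convex nonsmooth term. The smooth part here has gradient that is Lipschitz with a constant controlled by the spectral norm of $\sqrt{p}\mv{A}^{{\rm ext}}$ restricted to each block, and Theorem~\ref{theorem1} already exhibits that each block's quadratic coefficient is essentially $Lp$ (the strong-convexity modulus of (\ref{eqn:problem 2})), so the requisite Lipschitz/strong-convexity constants are explicit and uniform across blocks. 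Combining the per-iteration sufficient-decrease inequality (which exact block minimization guarantees is at least as good as a single proximal-gradient step) with the boundedness of the iterates and convexity yields the $\Gamma^{(t)}-\Gamma^\ast\le\mathcal{O}(1/t)$ bound in the usual telescoping manner.

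I expect the main obstacle to be rigorously justifying that the exact cyclic block updates inherit the $\mathcal{O}(1/t)$ rate rather than merely asymptotic convergence, since the classical Tseng result gives convergence of limit points but not a rate, and the clean $\mathcal{O}(1/t)$ rate proofs are usually stated for proximal-gradient-type block updates with a fixed step size. The technical work lies in showing that exact minimization of each strongly convex block subproblem produces at least the sufficient decrease of a proximal-gradient step, so that the known rate analysis transfers; handling the cyclic (Gauss--Seidel) order, as opposed to randomized block selection, is the delicate point, and I would rely on the fact that the per-block problems here are low-dimensional quadratics-plus-norm with exact solutions, which makes the sufficient-decrease comparison tight.
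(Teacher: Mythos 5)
Your proposal is correct and takes essentially the same route as the paper: both arguments verify the identical structural conditions --- uniqueness of each block minimizer (from the strong convexity established in Theorem~\ref{theorem1}), block-separability/regularity of the nonsmooth penalty so that blockwise optimality aggregates to full stationarity, and convexity upgrading stationarity to global optimality --- and then appeal to established convergence and iteration-complexity theory for exact cyclic BCD. The only difference is which references carry the load: the paper invokes the BSUM framework \cite{razaviyayn2013unified} for the limit-point claim and \cite{hong2017iteration} for the ${\cal O}(1/t)$ rate, and the obstacle you flag (transferring the rate from proximal-gradient-type updates to exact cyclic minimization) is exactly what that second cited result resolves.
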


\begin{proof}
Please refer to Appendix B.		
\end{proof}

\begin{theorem}\label{theorem3}
Given any $\epsilon>0$, the total complexity of Algorithm \ref{table2} to find an $\epsilon$-optimal solution of problem (\ref{eqn:problem 1}) that satisfies $\Gamma^{(t)}-\Gamma^\ast\leq \epsilon$ is given by \begin{align}\label{eqn:complexity}
	{\cal O}\left(\frac{(L+\tau_{\max})\tau_{\max}MN}{\epsilon}\right).
	\end{align}
\end{theorem}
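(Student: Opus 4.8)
The plan is to multiply the number of outer iterations needed to reach $\epsilon$-optimality by the arithmetic cost of a single outer iteration of Algorithm~\ref{table2}. The iteration count follows immediately from Theorem~\ref{theorem2}: since $\Gamma^{(t)}-\Gamma^\ast\leq{\cal O}(1/t)$ for all sufficiently large $t$, forcing the bound below $\epsilon$ requires only $t={\cal O}(1/\epsilon)$ outer iterations. It thus remains to establish that one outer iteration costs ${\cal O}((L+\tau_{\max})\tau_{\max}MN)$.

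To bound the per-iteration cost, I would count the work of one full sweep over the $N(\tau_{\max}+1)$ blocks processed in Step~2. For a fixed block $(\bar{n},\bar{\tau})$, the dominant operation in the closed-form update~(\ref{eqn:optimal solution}) is the matched-filter product $(\mv{a}_{\bar{n},\bar{\tau}}^{{\rm ext}})^H\tilde{\mv{Y}}_{\bar{n},\bar{\tau}}$, which multiplies a $1\times(L+\tau_{\max})$ vector by an $(L+\tau_{\max})\times M$ matrix at cost ${\cal O}((L+\tau_{\max})M)$; the subsequent norm evaluation, the formation of $\gamma_{\bar{n},\bar{\tau}}$ in~(\ref{eqn:gamma}), and the scaling are each ${\cal O}(M)$ and hence negligible. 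The step requiring the most care is confirming that updating the residual also costs only ${\cal O}((L+\tau_{\max})M)$ per block. A naive evaluation of $\tilde{\mv{Y}}_{\bar{n},\bar{\tau}}$ from its defining sum~(\ref{eqn:A}) over all $(n,\tau)\neq(\bar{n},\bar{\tau})$ would cost ${\cal O}(N\tau_{\max}(L+\tau_{\max})M)$ and destroy the bound; the key point is that Algorithm~\ref{table2} instead maintains a running residual $\bar{\mv{Y}}$ and performs only the two rank-one corrections in Steps~2.2 and~2.4, each of which touches just the $(L+\tau_{\max})M$ entries of an outer product $\mv{a}_{\bar{n},\bar{\tau}}^{{\rm ext}}(\cdot)^T$. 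With this incremental maintenance the per-block cost stays at ${\cal O}((L+\tau_{\max})M)$, and summing over the $N(\tau_{\max}+1)$ blocks yields a per-iteration cost of ${\cal O}(N\tau_{\max}(L+\tau_{\max})M)$.

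Finally I would combine the two estimates: multiplying the ${\cal O}(1/\epsilon)$ iteration count by the ${\cal O}(N\tau_{\max}(L+\tau_{\max})M)$ per-iteration cost gives the claimed total complexity ${\cal O}\!\left((L+\tau_{\max})\tau_{\max}MN/\epsilon\right)$. I expect the incremental residual bookkeeping of Steps~2.2 and~2.4 to be the crux, since it is precisely what reduces the per-block work from quadratic to linear in the number of blocks and thereby makes the overall complexity linear in $M$ and $N$.
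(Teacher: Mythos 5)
Your proof is correct and follows essentially the same route as the paper's: combine the ${\cal O}(1/\epsilon)$ iteration count from Theorem~\ref{theorem2} with a per-sweep cost of ${\cal O}((L+\tau_{\max})\tau_{\max}MN)$ obtained by charging ${\cal O}((L+\tau_{\max})M)$ to each of the $N(\tau_{\max}+1)$ block updates. Your additional observation about the incremental residual maintenance in Steps~2.2 and~2.4 is a welcome refinement, since the paper's proof simply asserts the per-block cost and leaves this bookkeeping implicit.
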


\begin{proof}
Please refer to Appendix C.
\end{proof}

Theorems \ref{theorem2} and \ref{theorem3} imply that the BCD Algorithm can solve problem (\ref{eqn:problem 1}) globally with a complexity linear to $N$ and $M$.

\section{The Approach to Select $\rho$}\label{sec:The Approach to Select rho}
After solving problem (\ref{eqn:problem 1}) with any given $\rho>0$, we introduce in this section how to determine the value of $\rho$ such that the solution to problem (\ref{eqn:problem 1}) is a good solution to problem (\ref{eqn:problem}). In this paper, we update the value of $\rho$ iteratively. Specifically, at the beginning, we set an initial value to $\rho$ as $\rho=\rho^{{\rm initial}}>0$ and solve problem (\ref{eqn:problem 1}) via Algorithm \ref{table2}. Then, we keep updating $\rho$ as $\rho=\delta \rho$, where $\delta>1$, and solving problem (\ref{eqn:problem 1}) iteratively until for some large enough value of $\rho$, the solution to problem (\ref{eqn:problem 1}) satisfies constraint (\ref{eqn:sparse constraint}) in problem (\ref{eqn:problem}). The overall algorithm to solve problem (\ref{eqn:problem}) via solving a sequence of problem (\ref{eqn:problem 1}) is summarized in Algorithm \ref{table1}.
\begin{algorithm}[t]
{\bf Initialization}: Set an initial value of $\rho$ as $\rho=\rho^{{\rm initial}}>0$; \\
{\bf Repeat}:
\begin{itemize}
\item[1] Find the solution to problem (\ref{eqn:problem 1}) given $\rho$, which is denoted by $\tilde{\mv{x}}_{n,\tau}^{{\rm ext}}$'s, via Algorithm \ref{table2};
\item[2] Set
\begin{align}\label{eqn:sparse matrix 1}
\mv{x}_{n,\tau}^{{\rm ext}}=\left\{\begin{array}{ll} \tilde{\mv{x}}_{n,\tau}^{{\rm ext}}, & {\rm if} ~ \frac{\left\|\tilde{\mv{x}}_{n,\tau}^{{\rm ext}}\right\|_2^2}{M}\geq \zeta \alpha_n, \\ \mv{0}, & {\rm otherwise}, \end{array}\right. ~ \forall n,\tau,
\end{align}where $0<\zeta<1$ is a given parameter to control the sparsity of $\mv{X}^{{\rm ext}}$.
\item[3] Update $\rho=\delta\rho$ where $\delta>1$.
\end{itemize}
{\bf Until} the solution of $\mv{x}_{n,\tau}^{{\rm ext}}$'s satisfies constraint (\ref{eqn:sparse constraint}).
\caption{Proposed Algorithm for Solving Problem (\ref{eqn:problem})}
\label{table1}
\end{algorithm}

Note that for some inactive devices, maybe the power of corresponding estimated signals $\tilde{\mv{x}}_{n,\tau}^{{\rm ext}}$'s via Algorithm \ref{table2} are very weak, but non-zero. This will cause the so-called false alarm event, i.e., an inactive device is detected as an active device. To enhance the sparsity of the estimation of $\mv{X}^{{\rm ext}}$ and reduce the false alarm probability, after the convergence of Algorithm \ref{table2} in Step 1 of Algorithm \ref{table1}, we set $\mv{x}_{n,\tau}^{{\rm ext}}=\mv{0}$ if $\|\tilde{\mv{x}}_{n,\tau}^{{\rm ext}}\|^2$ is less than some threshold in Step 2 of Algorithm \ref{table1}.

\begin{remark}
One main issue under the conventional LASSO technique is how to select a proper value of $\rho$ such that the resulting LASSO problem is a good approximation of the original problem. Under our interested problem (\ref{eqn:problem}), the new constraint (\ref{eqn:sparse constraint}) enables an accurate stopping criterion for updating $\rho$ in Step 3 of Algorithm \ref{table1}. This is another advantage to use LASSO in this work, other than the closed-form solution given any $\rho$ shown in Theorem \ref{theorem1}.
\end{remark}

\section{Numerical Results}\label{sec:Numerical Results}
In this section, we provide numerical examples to verify the effectiveness of our proposed algorithm for detecting the active devices and estimating their delay and channels in asynchronous IoT systems. We assume that there are $N=100$ IoT devices located in a cell with a radius of $250$ m, and at each coherence block, only $K=10$ of them become active. Moreover, the maximum delay of all the devices is $\tau_{{\rm max}}=5$ symbols. The transmit power of the active devices is $23$ dBm. Last, the power spectral density of the AWGN at the BS is $-169$dBm/Hz, and the channel bandwidth is $10$ MHz.
\begin{figure}[t]
  \centering
  \includegraphics[width=8cm]{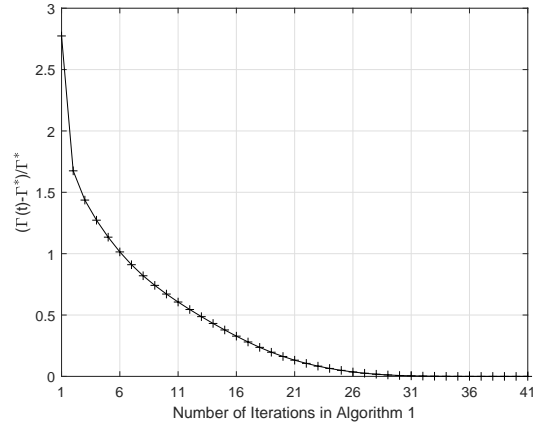}\vspace{-5pt}
  \caption{Sublinear convergence rate of Algorithm \ref{table2}.}\label{fig2}
\end{figure}

First, we provide one numerical example to verify the convergence property of Algorithm \ref{table2}, where the BS has $M=128$ antennas, and the pilot sequence length is $L=20$. Fig. \ref{fig2} shows the relative gap between the objective value of problem (\ref{eqn:problem 1}) achieved at each iteration of Algorithm \ref{table2} and the optimal value of problem (\ref{eqn:problem 1}), i.e., $(\Gamma^{(t)}-\Gamma^\ast)/\Gamma^\ast$. As shown in Theorem \ref{theorem2}, the solution generated by Algorithm \ref{table2} converges to the optimal solution sublinearly.

\begin{figure}[t]
  \centering
  \includegraphics[width=8cm]{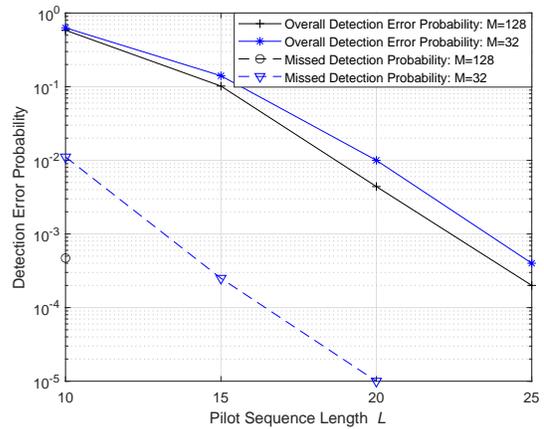}\vspace{-5pt}
  \caption{Detection error probability versus $L$.}\label{fig3}\vspace{-15pt}
\end{figure}

Next, we show the performance of the proposed algorithm by Monte Carlo simulation. Specifically, we generate $10^4$ realizations of device activity, location, and channels. Moreover, for each realization, if the detection of some $\beta_{k,n}$ is wrong, then we declare that this realization is under detection error. The overall detection error probability is defined as the ratio between the number of realizations under detection error and the total number of realizations, i.e., $10^4$. Similar to \cite{Part1}, the missed detection/false alarm probability is defined as the probability that an active/inactive device is detected as an inactive/active device. Fig. \ref{fig3} shows the overall detection error probability and missed detection probability (no false alarm events happen over the $10^4$ realizations) achieved by our proposed algorithm when $L$ ranges from 10 to 25 and $M=128$ or $32$. First, it is observed that the missed detection and false alarm probabilities for device activity detection are very low, e.g., when $M=128$ and $L\geq 15$, no missed detection and false alarm events are observed over the $10^4$ realizations. Next, it is observed that when $L$ is small, the overall detection error probability is high. This indicates that although the active devices can be detected, their delay estimation is in error with high probability. However, the priority of device activity detection is much higher than that of delay estimation. Moreover, delay estimation becomes more accurate as $L$ increases. Last, it is observed that massive MIMO is powerful to decrease the detection error probability.

\section{Conclusion}\label{sec:Conclusion}
In this paper, we showed that the problem of jointly detecting the active devices and estimating their delay and channels in asynchronous mMTC systems can be formulated as a group LASSO problem. Utilizing the BCD technique, we proposed an efficient algorithm to solve the group LASSO problem, the complexity of which is shown to be linearly proportional to the numbers of devices and antennas at the BS. Future work may consider how to apply the covariance-based device detection appraoch \cite{haghighatshoar2018improved,Chen19} in asynchronous IoT systems.

\section*{Appendix}

\subsection*{A: Proof of Theorem 1}\label{appendix1}
Given any $\bar{n}=1,\ldots,N$ and $\bar{\tau}=0,\ldots,\tau_{{\rm max}}$, define
\begin{align}
f_{\bar{n},\bar{\tau}}(\hspace{-1pt}\mv{x}_{\bar{n},\bar{\tau}}^{{\rm ext}}\hspace{-1pt})\hspace{-2pt}=\hspace{-2pt}0.5\left\|\hspace{-1pt}\tilde{\mv{Y}}_{\bar{n},\bar{\tau}}\hspace{-3pt}-\hspace{-3pt}\sqrt{p}\mv{a}_{\bar{n},\bar{\tau}}^{{\rm ext}}\left(\hspace{-1pt}\mv{x}_{\bar{n},\bar{\tau}}^{{\rm ext}}\hspace{-1pt}\right)^T\hspace{-1pt}\right\|_{{\rm F}}^2\hspace{-3pt}+\hspace{-3pt}\rho\left\|\hspace{-1pt}\mv{x}_{\bar{n},\bar{\tau}}^{{\rm ext}}\hspace{-1pt}\right\|_2.
\end{align}Since $\|\tilde{\mv{Y}}_{\bar{n},\bar{\tau}}-\sqrt{p}\mv{a}_{\bar{n},\bar{\tau}}^{{\rm ext}}\left(\mv{x}_{\bar{n},\bar{\tau}}^{{\rm ext}}\right)^T\|_{{\rm F}}^2$ is a strongly convex function and $\rho\|\mv{x}_{\bar{n},\bar{\tau}}^{{\rm ext}}\|_2$ is a convex function, $f_{\bar{n},\bar{\tau}}(\hspace{-1pt}\mv{x}_{\bar{n},\bar{\tau}}^{{\rm ext}})$ is a strongly convex function.

Next, we derive the optimal solution to minimize $f_{\bar{n},\bar{\tau}}(\hspace{-1pt}\mv{x}_{\bar{n},\bar{\tau}}^{{\rm ext}})$. It is observed that $f_{\bar{n},\bar{\tau}}(\mv{x}_{\bar{n},\bar{\tau}}^{{\rm ext}})$ is differentiable when $\mv{x}_{\bar{n},\bar{\tau}}^{{\rm ext}} \neq \mv{0}$, but not differentiable when $\mv{x}_{\bar{n},\bar{\tau}}^{{\rm ext}}=\mv{0}$. Moreover, when $\mv{x}_{\bar{n},\bar{\tau}}^{{\rm ext}}\neq \mv{0}$, the gradient of $f_{\bar{n},\bar{\tau}}(\mv{x}_{\bar{n},\bar{\tau}}^{{\rm ext}})$ is
\begin{align}\label{eqn:subgradient 1}
& [\partial f_{\bar{n},\bar{\tau}}(\mv{x}_{\bar{n},\bar{\tau}}^{{\rm ext}})]^T \nonumber \\ =&-\sqrt{p}\left(\mv{a}_{\bar{n},\bar{\tau}}^{{\rm ext}}\right)^H \tilde{\mv{Y}}_{\bar{n},\bar{\tau}}+pL\left(\mv{x}_{\bar{n},\bar{\tau}}^{{\rm ext}}\right)^T+\frac{\rho \left(\mv{x}_{\bar{n},\bar{\tau}}^{{\rm ext}}\right)^T}{\left\|\mv{x}_{\bar{n},\bar{\tau}}^{{\rm ext}}\right\|_2},
\end{align}while when $\mv{x}_{\bar{n},\bar{\tau}}^{{\rm ext}}=\mv{0}$, the sub-gradient of $f_{\bar{n},\bar{\tau}}(\mv{x}_{\bar{n},\bar{\tau}}^{{\rm ext}})$ is
\begin{align}\label{eqn:subgradient 2}
& [\partial f_{\bar{n},\bar{\tau}}(\mv{x}_{\bar{n},\bar{\tau}}^{{\rm ext}})]^T \nonumber \\ =&-\sqrt{p}\left(\mv{a}_{\bar{n},\bar{\tau}}^{{\rm ext}}\right)^H \tilde{\mv{Y}}_{\bar{n},\bar{\tau}}+pL\left(\mv{x}_{\bar{n},\bar{\tau}}^{{\rm ext}}\right)^T+\rho\partial \left\|\mv{x}_{\bar{n},\bar{\tau}}^{{\rm ext}}\right\|_2.
\end{align}

Since $f_{\bar{n},\bar{\tau}}(\mv{x}_{\bar{n},\bar{\tau}}^{{\rm ext}})$ is a strongly convex function over $\mv{x}_{\bar{n},\bar{\tau}}^{{\rm ext}}$, a point $\hat{\mv{x}}_{\bar{n},\bar{\tau}}^{{\rm ext}}$ minimizes this function if and only if $\mv{0}$ is a sub-gradient of the function $f_{\bar{n},\bar{\tau}}(\mv{x}_{\bar{n},\bar{\tau}}^{{\rm ext}})$ at this point, i.e.,
\begin{align}
\mv{0}\in \partial f_{\bar{n},\bar{\tau}}(\mv{x}_{\bar{n},\bar{\tau}}^{{\rm ext}})|_{\mv{x}_{\bar{n},\bar{\tau}}^{{\rm ext}}=\hat{\mv{x}}_{\bar{n},\bar{\tau}}^{{\rm ext}}}.
\end{align}According to (\ref{eqn:subgradient 1}) and (\ref{eqn:subgradient 2}), we study the sub-gradient of the function $f_{\bar{n},\bar{\tau}}(\mv{x}_{\bar{n},\bar{\tau}}^{{\rm ext}})$ in two cases: $\mv{x}_{\bar{n},\bar{\tau}}^{{\rm ext}}\neq \mv{0}$ and $\mv{x}_{\bar{n},\bar{\tau}}^{{\rm ext}}=\mv{0}$.

First, consider the case when $\mv{x}_{\bar{n},\bar{\tau}}^{{\rm ext}}\neq \mv{0}$. To make $\partial f_{\bar{n},\bar{\tau}}(\mv{x}_{\bar{n},\bar{\tau}}^{{\rm ext}})=\mv{0}$, according to (\ref{eqn:subgradient 1}), we have
\begin{align}\label{eqn:gradient}
-\sqrt{p}\left(\mv{a}_{\bar{n},\bar{\tau}}^{{\rm ext}}\right)^H \tilde{\mv{Y}}_{\bar{n},\bar{\tau}}+pL\left(\mv{x}_{\bar{n},\bar{\tau}}^{{\rm ext}}\right)^T+\frac{\rho \left(\mv{x}_{\bar{n},\bar{\tau}}^{{\rm ext}}\right)^T}{\left\|\mv{x}_{\bar{n},\bar{\tau}}^{{\rm ext}}\right\|_2}=\mv{0}.
\end{align}It then follows that
\begin{align}\label{eqn:proportional}
\left(\mv{x}_{\bar{n},\bar{\tau}}^{{\rm ext}}\right)^T=\gamma_{\bar{n},\bar{\tau}}\left(\mv{a}_{\bar{n},\bar{\tau}}^{{\rm ext}}\right)^H \tilde{\mv{Y}}_{\bar{n},\bar{\tau}},
\end{align}where $\gamma_{\bar{n},\bar{\tau}}=\sqrt{p}/\left(pL+\frac{\rho}{\|\mv{x}_{\bar{n},\bar{\tau}}^{{\rm ext}}\|_2}\right)>0$. As a result, $(\mv{x}_{\bar{n},\bar{\tau}}^{{\rm ext}})^T$ should be linear to the vector $(\mv{a}_{\bar{n},\bar{\tau}}^{{\rm ext}})^H \tilde{\mv{Y}}_{\bar{n},\bar{\tau}}$. The remaining job is to find the value of $\gamma_{\bar{n},\bar{\tau}}$ such that (\ref{eqn:gradient}) is true. By substituting $\gamma_{\bar{n},\bar{\tau}}$ into (\ref{eqn:gradient}), it follows that
\begin{align}
& (\hspace{-2pt}-\hspace{-2pt}\sqrt{p}\hspace{-2pt}+\hspace{-2pt}pL)\gamma_{\bar{n},\bar{\tau}}\hspace{-2pt}\left(\mv{a}_{\bar{n},\bar{\tau}}^{{\rm ext}}\right)^H \hspace{-2pt}\tilde{\mv{Y}}_{\bar{n},\bar{\tau}}\hspace{-2pt} +\hspace{-2pt}\frac{\rho \gamma_{\bar{n},\bar{\tau}}\hspace{-2pt}\left(\mv{a}_{\bar{n},\bar{\tau}}^{{\rm ext}}\right)^H \hspace{-2pt} \tilde{\mv{Y}}_{\bar{n},\bar{\tau}}}{\left\|\gamma_{\bar{n},\bar{\tau}}\hspace{-2pt}\left(\mv{a}_{\bar{n},\bar{\tau}}^{{\rm ext}}\right)^H \hspace{-2pt} \tilde{\mv{Y}}_{\bar{n},\bar{\tau}}\right\|_2} \nonumber \\
= & (\hspace{-2pt}-\hspace{-2pt}\sqrt{p}\hspace{-2pt}+\hspace{-2pt}pL)\gamma_{\bar{n},\bar{\tau}}\hspace{-2pt}\left(\mv{a}_{\bar{n},\bar{\tau}}^{{\rm ext}}\right)^H\hspace{-2pt} \tilde{\mv{Y}}_{\bar{n},\bar{\tau}} \hspace{-2pt}+\hspace{-2pt}\frac{\rho \gamma_{\bar{n},\bar{\tau}}\hspace{-2pt}\left(\mv{a}_{\bar{n},\bar{\tau}}^{{\rm ext}}\right)^H \hspace{-2pt} \tilde{\mv{Y}}_{\bar{n},\bar{\tau}}}{\gamma_{\bar{n},\bar{\tau}}\hspace{-2pt}\left\|\left(\mv{a}_{\bar{n},\bar{\tau}}^{{\rm ext}}\right)^H \hspace{-2pt} \tilde{\mv{Y}}_{\bar{n},\bar{\tau}}\right\|_2}\hspace{-2pt} \label{eqn:1} \\
= & \hspace{-2pt}\left(\hspace{-2pt}-\hspace{-2pt}\sqrt{p}\hspace{-2pt}+\hspace{-2pt}pL\gamma_{\bar{n},\bar{\tau}}\hspace{-2pt}+\hspace{-2pt}\frac{\rho}{\left\|\left(\mv{a}_{\bar{n},\bar{\tau}}^{{\rm ext}}\right)^H \hspace{-2pt}\tilde{\mv{Y}}_{\bar{n},\bar{\tau}}\right\|_2}\right)\left(\mv{a}_{\bar{n},\bar{\tau}}^{{\rm ext}}\right)^H \tilde{\mv{Y}}_{\bar{n},\bar{\tau}} \label{eqn:2} \\ =&\mv{0},
\end{align}where (\ref{eqn:1}) is because $\gamma_{\bar{n},\bar{\tau}}>0$. Note that there exists a unique positive solution to (\ref{eqn:2}) if $\|(\mv{a}_{\bar{n},\bar{\tau}}^{{\rm ext}})^H \tilde{\mv{Y}}_{\bar{n},\bar{\tau}}\|_2>\frac{\rho}{\sqrt{p}}$, and this solution is denoted by (\ref{eqn:gamma}) in Theorem \ref{theorem1}. Therefore, if $\|(\mv{a}_{\bar{n},\bar{\tau}}^{{\rm ext}})^H \tilde{\mv{Y}}_{\bar{n},\bar{\tau}}\|_2>\frac{\rho}{\sqrt{p}}$, then the solution given in (\ref{eqn:proportional}) and (\ref{eqn:gamma}) minimizes the objective function of problem (\ref{eqn:problem 2}).

Next, consider the case when $\mv{x}_{\bar{n},\bar{\tau}}^{{\rm ext}}=\mv{0}$. In this case, according to (\ref{eqn:subgradient 2}), $\mv{0}$ minimizes problem (\ref{eqn:problem 2}) if
\begin{align}\label{eqn:optimal condition}
-\sqrt{p}\left(\mv{a}_{\bar{n},\bar{\tau}}^{{\rm ext}}\right)^H \tilde{\mv{Y}}_{\bar{n},\bar{\tau}}+\rho\left[\partial \left\|\mv{x}_{\bar{n},\bar{\tau}}^{{\rm ext}}\right\|_2\big|_{\mv{x}_{\bar{n},\bar{\tau}}^{{\rm ext}}=\mv{0}}\right]^T=\mv{0}.
\end{align}Note that $\partial \left\|\mv{x}_{\bar{n},\bar{\tau}}^{{\rm ext}}\right\|_2|_{\mv{x}_{\bar{n},\bar{\tau}}^{{\rm ext}}=\mv{0}}=\{\mv{g}:\|\mv{g}\|_2\leq 1\}$. As a result, (\ref{eqn:optimal condition}) is true only if $\|(\mv{a}_{\bar{n},\bar{\tau}}^{{\rm ext}})^H\tilde{\mv{Y}}_{\bar{n},\bar{\tau}}\|_2\leq \rho/\sqrt{p}$. In this case, $\mv{x}_{\bar{n},\bar{\tau}}^{{\rm ext}}=\mv{0}$ minimizes the objective function of problem (\ref{eqn:problem 2}).

To summarize, the optimal solution to problem (\ref{eqn:problem 2}) is given by (\ref{eqn:optimal solution}). Theorem \ref{theorem1} is thus proved.

\subsection*{B: Proof of Theorem 2}\label{appendix2}
Here, we provide a brief proof of Theorem \ref{theorem2}. First, the proof of the global optimality of Algorithm \ref{table2} is based on \cite[Theorem 2]{razaviyayn2013unified}. In particular, we need to check the following three conditions: 1) We set the upper bound function (for each block) in \cite{razaviyayn2013unified} to be the objective function in (\ref{eqn:problem 2}). Then, all conditions in Assumption 2 in \cite{razaviyayn2013unified} hold automatically. 2)
	It follows from Theorem \ref{theorem1} that the objective function in problem (\ref{eqn:problem 2}) is strongly convex and hence the solution to problem (\ref{eqn:problem 2}) is unique. 3) Because the nonsmooth term is decoupled among different blocks, the objective function in problem (\ref{eqn:problem 1}) is indeed regular. Combining 1), 2), and 3) together, it follows from \cite[Theorem 2]{razaviyayn2013unified} that every limit point of the iterates generated by Algorithm \ref{table2} is a stationary point of problem (\ref{eqn:problem 1}). Moreover, since problem (\ref{eqn:problem 1}) is convex, any stationary point is also a global solution. As a result, Algorithm \ref{table2} solves problem (\ref{eqn:problem 1}) globally. Moreover, the sublinear convergence rate shown in (\ref{eqn:sublinear}) directly follows from \cite[Theorem 2]{hong2017iteration}.

\subsection*{C: Proof of Theorem 3}\label{appendix3}
First, the complexity of solving problem (\ref{eqn:problem 2}) for any block $\mv{x}_{\bar{n},\bar{\tau}}^{{\rm ext}}$ based on (\ref{eqn:optimal solution}) is ${\cal O}((L+\tau_{\max})M)$. Since there are $(\tau_{\max}+1)N$ blocks, the total complexity of each iteration of Algorithm \ref{table2} to update all blocks once is ${\cal O}((L+\tau_{\max})\tau_{\max}MN)$. According to Theorem \ref{theorem2}, to get an $\epsilon$-optimal solution of problem (\ref{eqn:problem 1}), the total number of iterations should be in the order of $1/\epsilon$. Thus, the total complexity of Algorithm \ref{table2} to find an $\epsilon$-optimal solution is (\ref{eqn:complexity}).

\bibliographystyle{IEEEbib}
\bibliography{CIC}

\end{document}